\newtheorem{theorem}{Theorem}
\newtheorem{prop}{Proposition}
\newtheorem{lemma}{Lemma}
\newtheorem{definition}{Definition}
\newtheorem{assumption}{\textbf{Assumption}}
\newtheorem{example}{\indent Example}
\newproof{proof}{Proof}
\newproof{pot}{Proof of Theorem \ref{thm}}
\def\and{\textrm{ and}}
\newcommand{\Domain}{\mathcal{D}}
\newcommand{\PWA}{\mathrm{PWA}}
\newcommand{\genpwaparamM}{A}
\newcommand{\genpwaparamV}{a}
\newcommand{\Mode}{I}
\newcommand{\state}{x}
\newcommand{\pState}{X}
\newcommand{\Vertex}{\mathcal{F}_0}
\newcommand{\prtition}{\mathcal{P}}
\newcommand{\qState}{Z}
\newcommand{\vMatrix}{E}
\newcommand{\vVec}{e}
\newcommand{\Rprtition}{\mathcal R}
\newcommand{\Pindex}{\Mode(\prtition)}
\newcommand{\Int}[1]{\mathrm{Int}\left(#1 \right)}
\newcommand{\Dom}[1]{\mathrm{Dom}\left(#1 \right)}
\newcommand{\conv}[1]{\mathrm{conv} \left(#1 \right)}
\newcommand{\R}{\mathbb{R}}
\begin{document}
\let\WriteBookmarks\relax
\def\floatpagepagefraction{1}
\def\textpagefraction{.001}

\shorttitle{Invariant Set Estimation for Piecewise Affine Systems}

\shortauthors{P Samanipour et~al.}

\title [mode = title]{Invariant Set Estimation for Piecewise Affine Dynamical Systems Using Piecewise Affine Barrier Function}                      



%
\author[1]{Pouya Samanipour}[
                        orcid=0000-0002-7724-8878]






\author[1]{Hasan A.Poonawala}[orcid=0000-0003-4201-4745]
\cormark[1]
\affiliation[1]{organization=University of Kentucky,
    city={Lexington},
    state={Kentucky},
    country={USA}}
\ead{hasan.poonawala@uky.edu}






\cortext[cor1]{Corresponding author}



\begin{abstract}
This paper introduces an algorithm for approximating the invariant set of closed-loop controlled dynamical systems identified using ReLU neural networks or piecewise affine ($\PWA$) functions, particularly addressing the challenge of providing safety guarantees for ReLU networks commonly used in safety-critical applications. The invariant set of $\PWA$ dynamical system is estimated using ReLU networks or its equivalent $\PWA$ function. This method entails formulating the barrier function as a $\PWA$ function and converting the search process into a linear optimization problem using vertices. We incorporate a domain refinement strategy to increase flexibility in case the optimization does not find a valid barrier function. Moreover, the objective of optimization is to maximize the invariant set based on the current partition. Our experimental results demonstrate the effectiveness and efficiency of our approach, demonstrating its potential for ensuring the safety of $\PWA$ dynamical systems.
\end{abstract}


\begin{keywords}
Barrier Function \sep Piecewise Affine Dynamics \sep ReLU Neural Networks \sep Forward Invariant Set
\end{keywords}

\maketitle
\section{Introduction}
Safety is paramount in various applications, such as robotics and autonomous driving. It has become increasingly common for these applications to be deployed in diverse environments as they gain more advanced features and autonomy. It is imperative to guarantee safety for both robots or autonomous vehicles and the humans engaging with them~\cite{liu2023safe}. Conventional control methods often rely on pre-established models and manually designed controllers, a paradigm that might not be optimal in the face of the complex and dynamic environment~\cite{anand2021safe}. 

One common approach in the literature that can guarantee safety automatically is the model predictive controller (MPC)~\cite{morari1999model}. MPC consists of making decisions based on a predictive model of the system~\cite{morari1999model}. Control sequences must be optimized over a finite time horizon while considering the system's dynamics and constraints. In the context of the MPC controller, due to the fact that it heavily depends on the accuracy of the predictive model, learning-based approaches are used in order to learn the predictive model~\cite{manzano2020robust,seel2021neural,MAKDESI2023100849}.

The other common approach for guaranteeing safety automatically is using barrier functions (BF). To verify safety in dynamical systems, a barrier function synthesized by the sum of squares was initially proposed by~\cite{prajna2007framework}. In this case, the BF is dealing with a closed-loop dynamic with a known safe controller. For dealing with dynamical systems with unknown safe controllers, as described in~\cite{wieland2007constructive}, the control barrier function (CBF) was initially introduced, followed by reciprocal and zeroing less conservative barrier functions described in~\cite{ames2014control,ames2019control}. CBFs are commonly employed as safety filters in control systems. 
Since these methods are highly dependent on the model, the barrier function should be robust against uncertainty~\cite{HAMDIPOOR2023100840}.

Accordingly, the use of machine learning approaches to learn the model and the control policy using data has gained significant attention in recent years~\cite {anand2021safe,jebellat2021training,shamsoshoara2023joint,rajoli2023triplet,kargar2023integrated,10354048}. 
Learning methods have led to an increased interest in rectified linear unit (ReLU) neural networks (NN). This is due to the universal approximation theorem of ReLU neural networks~\cite{leshno1993multilayer, Huang2020relu}. ReLU NNs are widely used to learn dynamics and control policy~\cite{dai2021lyapunov,10108069,sun2022formal}.
Learning approaches are challenged by the importance of ensuring safety during training and testing~\cite{marvi2021safe}. Thus, it is necessary to develop an automated approach for determining the forward invariance set for learned dynamics and control policy. 

Supervised learning and reinforcement learning (RL) can be used to ensure safety. A model-free RL controller is combined with a model-based controller using CBF in ~\cite{cheng2019end}. The method utilizes a Gaussian Process (GP) model to learn unknown system dynamics and derives CBFs from the GP model. 
An off-policy actor-critic RL algorithm is employed in~\cite{marvi2021safe} to maintain a safe policy without prior knowledge of the dynamics of the system. The CBF formulation of~\cite{taylor2020learning} accommodates a variety of uncertainties learned episodically by NN by placing uncertainty directly into its formulation. An adaptive sampling method is used to expand the safe region of a dynamical system with an unknown disturbance term in~\cite{wang2018safe}. Through chance-constrained optimization problems,~\cite{khojasteh2020probabilistic} provides a more general approach to optimizing system behavior and ensuring system safety. 

The issue of effectively addressing the synthesis of control barrier functions for broad categories of systems remains a gap in prior research endeavors. Nonetheless, recent investigations have introduced various methodologies~\cite{gillula2012guaranteed,robey2020learning,rabiee2023soft,srinivasan2020synthesis,zhao2021synthesizing,safari2023time} to confront this challenge. The reachability analysis has become a popular method for determining the safe set~\cite{gillula2012guaranteed}, despite potential computational difficulties or a tendency to produce overly conservative estimates. In~\cite{srinivasan2020synthesis}, a support vector machine is leveraged to parameterize the CBF in real-time, while~\cite{taylor2020learning} introduces an optimization-based approach to acquiring knowledge about uncertainty in CBF constraints through expert demonstrations. Another approach, presented in~\cite{rabiee2023soft,10156245}, involves employing a soft minimum and soft max barrier function over a finite horizon to synthesize safe control for nonlinear systems subject to actuator constraints. The utilization of a feedforward neural network with ReLU activation function is proposed in ~\cite{zhao2021synthesizing} for barrier certificates. The verification of the barrier certificate is addressed using Mixed Integer Linear Programming (MILP) and Mixed Integer Quadratically Constrained Problems (MIQCP) with non-linear terms. Furthermore, ~\cite{zhang2023exact} discusses verifying the ReLU neural network in the context of the barrier function. The computational complexity of the verifier, however, remains a challenge.


This paper introduces a novel framework for determining the forward invariant set of $\PWA$ dynamical systems or their equivalent ReLU neural networks, bypassing the need for verification procedures. 
The contributions can be described as follows:
\begin{enumerate}
\item Introduction of a framework for determining the forward invariant set of PWA dynamical systems or ReLU neural networks without relying on traditional verification procedures.
\item The BF is represented as a $\PWA$ function with a structure similar to dynamical systems, with constraints expressed as linear equations.
\item Adding a refinement step to the algorithm to enhance flexibility through the use of higher capacity $\PWA$ Barrier Functions.
\end{enumerate}
The paper's organization is as follows: Section \ref{sec:prelim} provides an overview of relevant concepts, Section \ref{sec:Problem Formulation} details the parameterization of the barrier function and assumptions for the forward invariant set, Section \ref{sec:Main} presents the main algorithm, and Section \ref{sec:res} showcases simulation results.
\section{Preliminaries} \label{sec:prelim}
In this paper, we strive to estimate the invariant set for dynamical systems described by $\PWA$ functions. We first introduce the concept of positive invariant set, barrier function, and then describe the $\PWA$ function and required notation for the remainder of the paper.
\subsection{Forward Invariant Set and Barrier Function Description}
Let us consider the following nonlinear dynamics:
\begin{equation}\label{eq:general nl}
\dot x = f_{cl}(x).
\end{equation}
We assume that the closed-loop dynamics, represented by $f_{cl}$, is locally Lipschitz continuous. Under this assumption, for any given initial condition $x_0 \in \mathbb{R}^n$, there exists a time interval denoted as $I(x_0) = [0, \tau_{\text{max}})$, within which a unique solution $x:I(x_0) \rightarrow \mathbb{R}^n$ exists. This solution satisfies the differential equation \eqref{eq:general nl} with the initial condition $x_0$~\cite{ames2019control}.
\begin{definition}[Forward invariant set~\cite{ames2019control}]
In the case of general controlled nonlinear dynamics described by equation~\eqref{eq:general nl}, let us consider a super level set $C$ defined by a continuous differentiable function $h:D\subset \mathcal{R}^n \rightarrow\mathcal{R}$ as follows:
\begin{equation}\label{eq:BF Domain}
C = \{x \in D : h(x) \geq 0\},
\end{equation}
where $D$ represents the domain of $h$. The set $C$ is said to be forward invariant if, for every $x_0 \in C$, the solution $x(t)$ satisfies $x(t) \in C$ for all $x_0$ and all $t \in I(x_0)$. If the set $C$ is forward invariant, it implies that the system described by equation ~\eqref{eq:general nl} is safe with respect to $C$.
\end{definition}
\begin{definition}[Barrier function~\cite{ames2019control}]
Let $C \subset D \subseteq \mathbb{R}^n$ be the superlevel set of a continuously differentiable function $h(x)$. We define $h(x)$ as a Barrier function if there exists an extended class $K_\infty$ function $\alpha$ such that:
\begin{equation}\label{eq:barrier general}
L_{f_{cl}} h(x) \geq -\alpha(h(x)), \quad \text{for all } x \in D,
\end{equation}
where $L_{f_{cl}} h(x)$ represents the Lie derivative of $h$ along the closed-loop dynamics $f_{cl}$. The equation ~\eqref{eq:barrier general} is the generalized form of Nagumo's theorem~\cite{nagumo1942lage}. Equation ~\eqref{eq:barrier general} implies that $h$ can increase or decrease along the trajectory of the system described by equation ~\eqref{eq:general nl} within $Int(C)$. Additionally, it ensures that $h$ remains increasing along the trajectories of ~\eqref{eq:general nl} on  $\partial C$ as well as on the complement of $C$ within $D$. 
\end{definition}

\subsection{Representations of Piecewise Affine Functions}
\label{sec:pwa}
First, we introduce the notation and definitions for $\PWA$ functions that will be used throughout this paper, and then we describe the standard representation of $\PWA$ functions~\ref {sec:pwafun}. These definitions and notations establish a consistent framework for the subsequent discussions and analyses in the paper.
\paragraph*{\bf Notation}
Given a set $S$, the set $\Mode(S)$ is defined as the index set corresponding to each element in $S$. The convex hull of $S$ is represented by $\conv S$. The interior of $S$ is represented as $\text{Int}(S)$. The boundary of $S$ is expressed as $\partial S$. The closure of $S$ is denoted as $\overline{S}$, encompassing both $\text{Int}(S)$ and its boundary. In the context of vectors, $A^T$ denotes the transpose of $A$. The inner product of two vectors is denoted by $\langle \cdot, \cdot \rangle$. Additionally, it is important to note that the symbols $\preceq$, $\succ$, and $\prec$ carry the same element-wise interpretation as the symbols $\leq$, $>$, and $<$, respectively. 
\begin{definition}
This paper introduces the concept of a partition $\mathcal{P}$, which is defined as a collection of subsets $\{\pState_i \}_{i \in \Pindex}$, where each $\pState_i$ represents a closed subset of $\mathcal{R}^n$ for all $i\in \Pindex$. In partition $\mathcal{P}$, $\text{Dom}(\mathcal{P})=\cup_{i\in \Pindex}\pState_i$  and $\Int{\pState_i} \cap  \Int{\pState_j} = \emptyset$ for $i\neq j$. As a result of these notations and definitions, we can proceed with the introduction of the $\PWA$ functions.
\end{definition}

\subsection{Piecewise Affine Functions}
\label{sec:pwafun}
We represent a $\PWA(x)$ using an explicit parameterization with respect to a partition $\mathcal{P} = \{\pState_i\}_{i \in \Pindex}$, a collection of matrices $\mathbf{A}_\prtition = \{\genpwaparamM_i \}_{i \in \Pindex}$, and vectors  $\mathbf a_\prtition = \{\genpwaparamV_i \}_{i \in \Pindex}$. The parameterization is defined as follows:
\begin{align}
\PWA(x)	&= \genpwaparamM_{i} \state + \genpwaparamV_{i}, \textrm{ if } \state \in \pState_i,  \text{ where}\label{eq:definePWAfun}\\
  \pState_i &= \{x \in \R^n \colon \vMatrix_i \state + \vVec_i  \succeq 0 \} \label{eq:generalpwaparams}.\nonumber
\end{align}
This parameterized formulation concisely represents the piecewise affine function $\PWA(x)$ based on the given partition $\mathcal{P}$ and the associated matrices and vectors.

\begin{assumption}
In this paper, we specifically assume that all $\PWA$ functions presented herein, with their explicit form, satisfy the continuity constraints described in ~\cite{poonawala2021training,johansson1999piecewise}).    
\end{assumption}
\begin{assumption}\label{ass:assumption bounded polytopes}
In our study, all cells are assumed to be bounded. The vertex representation of the $\PWA$ dynamics can therefore be used. In a cell $X_i$, the vertices represent the faces of 0 dimensions. Hence, $X_i$ can be represented mathematically as the convex hull of its vertex set:
\begin{equation}
X_i = \conv{\Vertex(X_i)},
\end{equation}
 where $\Vertex(X_i)$ refers to the set of vertices that belong to the cell $X_i$ ~\cite{henk2017basic}.
\end{assumption}
\begin{assumption}
This paper assumes that the $$\textbf{0}\in X_i \Rightarrow \textbf{0}\in\Vertex(X_i).$$  
\end{assumption}
\section{Problem formulation}\label{sec:Problem Formulation}
The objective of this paper is to identify a forward invariant set for dynamical systems characterized by the equation:
\begin{equation}\label{eq:pwa dynamic}
    \dot x=\PWA(x)
\end{equation}
where $x$ belongs to the $\Domain\subset\mathcal{R}^n$ representing the state variables and the term $\PWA$ denotes a piecewise affine function, which was elaborated in Section \ref{sec:pwafun}. To align with the notation used for the concepts of forward invariance and barrier functions, we define $\Domain$ as:
\begin{equation}\label{eq:Domain}
\Domain \triangleq \{x \in \mathcal{R}^n : x \in \mathrm{Dom}(\mathcal{P})\}.
\end{equation} 
It is important to note that no assumption is made regarding the forward invariance of $\Domain$.
Specifically, this study is concerned with continuous $\PWA$ functions based on polytopes. 
The objective of this section is to identify the invariant set for~\eqref{eq:pwa dynamic} which represents a closed-loop system under a designed controller for which an invariant set exists, but is initially unknown.Various methods can be employed to design the controller, such as traditional control techniques and explicit model predictive control (eMPC).

\subsection{Piecewise affine barrier function:}\label{sec:PWA BF}
This section introduces the formulation of the BF as a $\PWA$ function for the dynamical system described by equation~\eqref{eq:pwa dynamic}, considering a given partition $\mathcal{P}$. 
We consider the following index set.
\begin{align}
I_{\partial \Domain} &= \{ i\in I \colon X_i \cap \partial\Domain\neq \emptyset\}.
\end{align}
The set $I_{\partial \Domain}$ represents the indices of cells containing the boundary of $\Domain$. These cells have points in common with the boundary of the domain $\Domain$. 
In order to determine which vertices are in the $\partial \Domain$ and which vertices are in the $Int(\Domain)$, we can use the sets $I_{b}$ and $I_{int}$ as follows:
\begin{align}
I_{b} &= \{(i,k) : v_k \in \partial\Domain, i \in I_{\partial D}, v_k \in \Vertex(X_i)\}\\
I_{int} &= \{(i,k):v_k \notin \partial\Domain, i \in I(\prtition), v_k \in \Vertex(X_i)\}.
\end{align}
The set $I_{b}$ and $I_{int}$ consists of pairs of indices, where the first element denotes the cell index $i$, and the second element represents the vertex index $k$. $I_b$ represents the indices of the vertices on the boundary and their associated cells, while $I_{int}$ represents the indices of vertices in the interior of the partition and their associated cells. In Figure~\ref{fig:set description}, samples for these index cells are provided.
\begin{figure}
    \centering
    \includegraphics[scale=0.5]{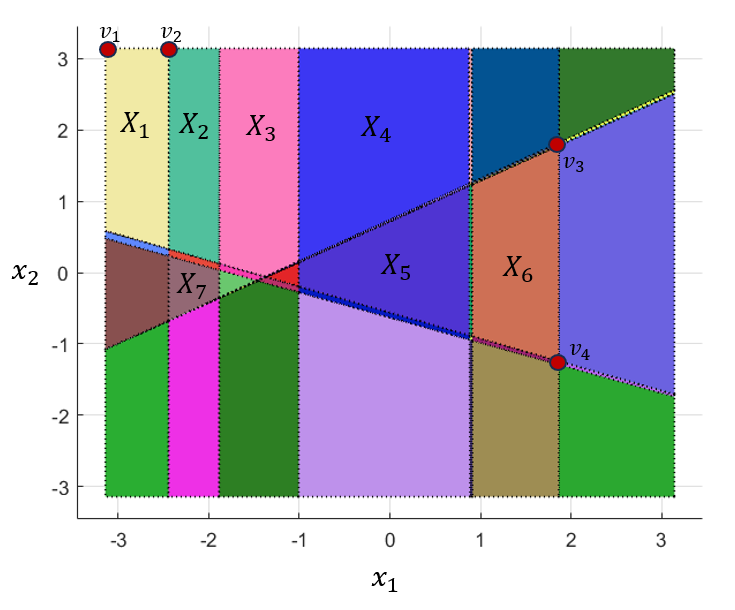}
    \caption{A sample partition of a $\PWA$ dynamics. Cells $X_1$ ,$X_2$, $X_3$ and $X_4$ share points with the boundary, hence, $\{1,2,3,4\}\in I_{\partial \Domain}$. While cells $X_5$,$X_6$, and $X_7$ do not share any points with the boundary. As a result $\{5,6,7\}\notin I_{\partial \Domain}$. Furthermore, $v_1,v_2$ are vertices on the boundary, while $v_3$ and $v_4$ are not on the boundary. Hence, $\{(1,1),(1,2),(2,2)\}\in I_b$ and $\{(6,3),(6,4)\}\in I_{int}$.}
    \label{fig:set description}
\end{figure}

The main idea in this paper is to formulate the barrier function using a $\PWA$ formulation for the cell $X_i$ as follows:
\begin{equation}\label{eq: BF PWA}
h_i(x)=p_i^Tx+q_i \quad \text{for} \quad i \in I(\prtition) 
\end{equation}  
where $p_i\in \mathcal{R}^n$ and $q_i$ is a scalar.
The function $h(x)$ is continuous and differentiable within the interior of the cell. This allows us to compute the derivative of the candidate Barrier function along the dynamics described by ~\eqref{eq:general nl} within the interior of the cell $X - \{0\}$. The derivative is expressed as:
\begin{equation}
\mathcal{L}_f h = \langle \nabla h, f(x) \rangle, \label{eq:generallyapcond}
\end{equation}
where $\nabla h$ represents the gradient of $h(x)$, and $\mathcal{L}$ represents the Lie derivative.
\begin{assumption}
    Let's assume the candidate barrier function $h_i(x) = p_i^T x + q_i$ and the dynamics equation given by $\dot{x} = A_ix + a_i$ for all $x \in \pState_i$. Then the derivative of the barrier function along the trajectories at $\mathbf{x}'$, when $h(\mathbf{x}')$ is differentiable can be computed as:
\begin{equation}\label{eq:derivative definition}
\dot{h} = p_i^T(A\mathbf{x}' + a).
\end{equation}
\end{assumption}

This paper has the following objectives:
\begin{itemize}
    \item develop an optimization problem to find the invariant set for the~\eqref{eq:pwa dynamic}.\label{goal1}
    \item Maximize the forward invariant set obtained from $\PWA$ barrier function.\label{goal2}
\end{itemize}
To achieve these goals, we must form the constraints based on the requirements of the barrier function as described in~\eqref{eq:barrier general}. In order to meet the goal of finding a maximized invariant set using a barrier function formulated as a $\PWA$ function, the following propositions and theorems are presented.
\begin{prop}\label{prop:neg on bound}
The following are equivalent:
\begin{enumerate}[(a)]
    \item $C \subseteq \Domain$
    \item  $p_i^Tv_k + q_i \leq 0,\quad \forall(i,k)\in I_b$\label{prop:neg on bound 2}
\end{enumerate} 
\end{prop}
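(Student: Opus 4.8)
The plan is to prove the equivalence by unpacking the definition of the superlevel set $C$ and exploiting the fact that both $h$ and $\Domain$ are piecewise affine over the same partition $\prtition$, so that containment and sign conditions can be checked cell-by-cell and, within each cell, vertex-by-vertex.

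First I would establish the direction \eqref{prop:neg on bound 2} $\Rightarrow$ (a) in its contrapositive-free form: assume $p_i^T v_k + q_i \leq 0$ for all $(i,k) \in I_b$, and show $C \subseteq \Domain$. Since $C = \{x \in \Domain : h(x) \geq 0\}$ by \eqref{eq:BF Domain}, we trivially have $C \subseteq \Domain$ from the very definition of $C$ as a subset of $\Domain$. So this direction is essentially immediate and the real content is the other implication. (If the intended reading of $C$ is as the superlevel set of the globally-defined PWA extension of $h$ beyond $\Domain$, then I would instead argue: any $x \in C$ lies in some cell $X_i$, write $x = \sum_k \lambda_k v_k$ as a convex combination of the vertices of $X_i$ by Assumption~\ref{ass:assumption bounded polytopes}; if $x \notin \Domain$ then $x$ lies outside $\mathrm{Dom}(\prtition)$, forcing at least one boundary vertex $v_k \in \partial\Domain$ to carry negative $h$-value while $h(x) \geq 0$, and affinity of $h_i$ plus the sign hypothesis on the remaining boundary vertices yields a contradiction.)

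Then I would prove (a) $\Rightarrow$ \eqref{prop:neg on bound 2} by contraposition: suppose there exists $(i,k) \in I_b$ with $p_i^T v_k + q_i > 0$. By definition of $I_b$, the vertex $v_k$ lies on $\partial\Domain$ and $v_k \in \Vertex(X_i)$. Since $h_i(v_k) = p_i^T v_k + q_i > 0$, we have $v_k \in C$. But $v_k \in \partial\Domain$ means every neighborhood of $v_k$ meets the complement of $\Domain$; combined with continuity of $h$ (Assumption on continuity of the PWA functions) and $h(v_k) > 0$, there is a point $x^\star \notin \Domain$ arbitrarily close to $v_k$ with $h(x^\star) > 0$, i.e.\ $x^\star \in C \setminus \Domain$, contradicting $C \subseteq \Domain$. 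The main obstacle here is making the "points outside $\Domain$ near $v_k$ still satisfy $h > 0$'' step rigorous: I would handle it by taking a sequence $x_n \to v_k$ with $x_n \notin \Domain$ (which exists since $v_k \in \partial\Domain = \partial\,\mathrm{Dom}(\prtition)$), using continuity of $h$ on a neighborhood to get $h(x_n) \to h(v_k) > 0$, hence $h(x_n) > 0$ for large $n$.

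The step I expect to be the genuine crux is pinning down exactly what "$C \subseteq \Domain$'' is asserting given that $h$ in \eqref{eq:BF PWA} is only defined on $\mathrm{Dom}(\prtition) = \Domain$ — so one must decide whether $C$ is taken relative to $\Domain$ (making (a) vacuous and the proposition really a statement characterizing when the \emph{global affine extensions} $h_i$ keep the superlevel set inside $\Domain$) or relative to the natural extension of each $h_i$. I would state this reading explicitly at the start of the proof, then the vertex argument via Assumption~\ref{ass:assumption bounded polytopes} (each cell is the convex hull of its vertices, and an affine function on a polytope attains its extreme values at vertices) does the rest: for a cell $X_i \in I_{\partial\Domain}$, the portion of $\partial X_i$ lying on $\partial\Domain$ is itself a union of faces whose vertices are exactly the $v_k$ with $(i,k) \in I_b$, so controlling $h_i$ at those vertices controls $h_i$ on $X_i \cap \partial\Domain$, which is precisely what prevents $C$ from poking through the boundary.
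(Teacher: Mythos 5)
Your proof is correct in substance but takes a genuinely different — and more careful — route than the paper's. The paper's proof only addresses the direction (a) $\Rightarrow$ (b): it splits into the cases $C=\Domain$ and $C\subsetneq\Domain$, and in the latter case argues that a point of $\Domain\setminus C$ yields \emph{some} vertex $v_d$ with $p_i^Tv_d+q_i<0$; that does not by itself establish the sign condition at \emph{every} boundary vertex, and the converse direction (b) $\Rightarrow$ (a) is never argued. You instead (i) observe that under the literal definition~\eqref{eq:BF Domain} the containment $C\subseteq\Domain$ is automatic, so the proposition only carries content if $C$ is read as the superlevel set of the affine extensions $h_i$ beyond $\Domain$, and (ii) prove the substantive direction by contraposition, using continuity of $h$ near a boundary vertex with $h(v_k)>0$ to produce points of $\{h>0\}$ just outside $\Domain$. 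This is the argument the paper presumably intends but does not write, and your explicit flagging of the definitional ambiguity is a real improvement: under the literal reading, (a) is vacuous while (b) can fail, so the stated equivalence needs the extension reading to be true. One caveat on your parenthetical (b) $\Rightarrow$ (a) argument under that reading: a point $x\notin\Domain$ lies in no cell, so it cannot be written as a convex combination of the vertices of some $X_i$, and the paper does not specify which affine piece governs exterior points; that direction only closes once you fix such an assignment (e.g., restrict to a neighborhood of $\partial\Domain$ where the nearest cell is unambiguous). Since what the optimization~\eqref{eq:barrierrelaxfinal} actually uses is that the vertex condition keeps $\{h\geq 0\}$ from crossing $\partial\Domain$, your contraposition argument captures the part that matters.
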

\begin{proof}
Regarding the definition $C$ in ~\eqref{eq:BF Domain} and (a), if $C=D$ we can conclude that  $$p_i^Tv_k + q_i=0,\quad (i,k)\in I_b.$$ Furthermore, if $C\subset D$, we may find points $x_d\in D$ where $x_d\notin C$. In light of Assumption~\ref{ass:assumption bounded polytopes} and parametrization of the barrier function, ~\eqref{eq: BF PWA}, it can be concluded that there is a vertex $v_{d}$ at which $p_i^Tv_d+q_i<0$. Accordingly, we can conclude (b).   
\end{proof}

It is essential that the obtained invariant set must be a subset of $\Domain$. Using Proposition~\ref{prop:neg on bound}, we can satisfy this requirement by maintaining the barrier function non-positive on the $\partial \Domain$ of the $\Domain$. 
However, there is no guarantee that such $p_i$ and $q_i$ values exist. In order to ensure the feasibility of the optimization problem, \ref{prop:neg on bound 2} must be relaxed to:
\begin{equation}\label{eq:soft negative}
p_i^Tv_k + q_i-\tau_{b_i} \leq 0,\hspace{0.2em} (i,k)\in I_b,v_k\in\Vertex(X_i),
\end{equation}
where $\tau_{b_i}$ represents a positive slack variable introduced to ease the constraint. This slack variable allows us to identify cells $X_i$ that might necessitate a higher capacity $\PWA$ function. The process for increasing the capacity of the $\PWA$ function is explained in section \ref{sec:Ref}.    
\begin{prop}\label{prop:continuity}
The candidate $\PWA$ barrier function is continuous if and only if the following condition is met:
\begin{align}\label{eq:cont}
h_i(v_k) = &h_j(v_k),\quad i \neq j\in \Pindex,\nonumber\\
&v_k\in \Vertex(X_i)\cap \Vertex(X_j).
\end{align}
\end{prop}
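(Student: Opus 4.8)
The plan is to prove the two directions of the ``if and only if'' separately, exploiting the fact that each cell is the convex hull of its vertices (Assumption~\ref{ass:assumption bounded polytopes}) and that an affine function is determined on a polytope by its values at the vertices. Recall the candidate barrier function is $h(x) = h_i(x) = p_i^T x + q_i$ whenever $x \in X_i$; the only possible discontinuities occur on shared faces $X_i \cap X_j$ with $i \neq j$, since each $h_i$ is affine (hence continuous) on the interior of its own cell.

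First I would establish necessity (continuity $\Rightarrow$ \eqref{eq:cont}). Suppose $h$ is continuous on $\Dom{\prtition}$. Take any $i \neq j$ in $\Pindex$ and any vertex $v_k \in \Vertex(X_i) \cap \Vertex(X_j)$. Since $v_k \in X_i$, evaluating $h$ via the $i$-th piece gives $h(v_k) = h_i(v_k)$; since $v_k \in X_j$, evaluating via the $j$-th piece gives $h(v_k) = h_j(v_k)$. Because $h$ is single-valued (well-defined as a continuous function), these must agree, so $h_i(v_k) = h_j(v_k)$, which is exactly \eqref{eq:cont}. (Strictly, if $v_k$ lies in the interior of $X_i$ one uses the $i$-th formula directly; if it lies on the boundary, one approaches $v_k$ through $\Int{X_i}$ and uses continuity of $h$ together with continuity of the affine map $h_i$ to get $h(v_k) = h_i(v_k)$, and likewise for $j$.)

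Next I would prove sufficiency (\eqref{eq:cont} $\Rightarrow$ continuity), which is the substantive direction. It suffices to show $h$ is continuous at every point $x_0$ on a shared face $X_i \cap X_j$. The key claim is: if $h_i$ and $h_j$ agree on all common vertices, then $h_i$ and $h_j$ agree on the entire common face $X_i \cap X_j$. To see this, note that $X_i \cap X_j$ is itself a polytope (intersection of the polytopes in~\eqref{eq:generalpwaparams}), and under the standing assumptions its vertices are among the common vertices $\Vertex(X_i) \cap \Vertex(X_j)$ — this uses that the partition's cells meet face-to-face, so that a face of the intersection is a common face of both cells and its vertices are vertices of both $X_i$ and $X_j$. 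Any point $y \in X_i \cap X_j$ is then a convex combination $y = \sum_k \lambda_k v_k$ of these common vertices; since $h_i$ and $h_j$ are affine and agree on each $v_k$, we get $h_i(y) = \sum_k \lambda_k h_i(v_k) = \sum_k \lambda_k h_j(v_k) = h_j(y)$. Hence the two affine pieces coincide on the overlap, so $h$ is well-defined and continuous there; since the cells cover $\Dom{\prtition}$ and overlap only on such shared faces, $h$ is continuous everywhere on its domain.

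The main obstacle is the geometric fact used in the sufficiency direction: that a common face of two cells has all its vertices among the shared vertices $\Vertex(X_i) \cap \Vertex(X_j)$, i.e.\ that the partition is ``face-to-face'' (or that one may reduce to the lower-dimensional sub-partition induced on $X_i \cap X_j$). This is implicit in the paper's notion of a polytopic partition together with the continuity-compatible parameterization referenced in Assumption~1 (cf.~\cite{poonawala2021training,johansson1999piecewise}); I would state it explicitly as the one nontrivial ingredient and cite that assumption. Everything else — affine interpolation over vertices, single-valuedness — is routine given Assumption~\ref{ass:assumption bounded polytopes}.
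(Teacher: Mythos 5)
Your proposal is correct and follows essentially the same route as the paper: necessity from single-valuedness at shared vertices, and sufficiency by noting that two affine pieces agreeing on the vertices of a common face agree on the whole face by convexity. You simply make explicit the face-to-face ingredient that the paper's proof leaves implicit in its appeal to ``convexity of the common faces between any two adjacent cells.''
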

\begin{proof}
    Clearly, if the barrier function is continuous, then the barrier function value must be the same at all common points, such as vertices.     
    Conversely, if~\eqref{eq:cont} is satisfied due to the parametrization of the barrier function and convexity of the common faces between any two adjacent cells, we can conclude that the candidate barrier function is continuous on the boundary of the cells. As stated in~\eqref{eq: BF PWA}, the candidate barrier function is continuous throughout the cell's interior. Therefore, the candidate $\PWA$ barrier function is continuous.
\end{proof}
\begin{prop}\label{prop: Generalized Nagumo}
The following are equivalent if we assume $\alpha(x)=\Tilde{\alpha} x$, where $\Tilde{\alpha}>0$:
\begin{align}\label{eq: Generalized Nagumo}
    (a)&\quad\frac{\partial h}{\partial x}\dot{x}+\alpha(h(x))\geq0\quad \forall x \in \Domain \nonumber\\
    (b)&\quad p_i^T(A_iv_k+a_i)+\Tilde{\alpha}(p_i^Tv_k+q_j)\geq0, \nonumber
    \forall i \in \Pindex,\\ &\quad k\in \Vertex(X_i).
\end{align}
\end{prop}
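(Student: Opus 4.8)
The plan is to establish the equivalence (a) $\Leftrightarrow$ (b) by exploiting the affine structure of both the candidate barrier function $h$ and the dynamics on each cell $X_i$, together with the vertex representation guaranteed by Assumption~\ref{ass:assumption bounded polytopes}. First I would fix a cell $X_i$ and compute the left-hand side of (a) explicitly on $\Int(X_i)$: since $h(x) = p_i^T x + q_i$ and $\dot x = A_i x + a_i$ there, the map
\[
g_i(x) \triangleq p_i^T(A_i x + a_i) + \tilde\alpha(p_i^T x + q_i)
\]
is affine in $x$. The key observation is that an affine function is nonnegative on a bounded polytope if and only if it is nonnegative at every vertex of that polytope, because every point of $X_i$ is a convex combination of $\Vertex(X_i)$ (Assumption~\ref{ass:assumption bounded polytopes}) and affine functions preserve convex combinations. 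This immediately gives that (a) restricted to $\Int(X_i)$ — hence to $\overline{X_i}$ by continuity — is equivalent to the inequalities $g_i(v_k) \geq 0$ for all $v_k \in \Vertex(X_i)$, which is exactly (b) for that $i$ (modulo the typo $q_j$ vs.\ $q_i$, which I would silently correct since the relevant cell index is $i$).

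Next I would handle the union over cells. Condition (a) asks for the inequality at every $x \in \Domain = \Dom{\mathcal{P}}$. Writing $\Domain = \bigcup_{i \in \Pindex} X_i$, the inequality holds on all of $\Domain$ iff it holds on each $X_i$; and since a point on a shared face lies in several cells, I should note that $h$ is assumed continuous (Proposition~\ref{prop:continuity}) so that $h(x)$ is unambiguous there, while the Lie derivative $\partial h/\partial x \cdot \dot x$ is evaluated cell-by-cell via Assumption (the one giving $\dot h = p_i^T(A_i x' + a_i)$). On the interior of each cell this derivative is well defined; on the boundary of a cell, approaching from within $X_i$ gives the value $g_i$, so requiring $g_i \geq 0$ on $\overline{X_i}$ for every $i$ is the natural reading of (a), and it is equivalent to the conjunction over $i$ of the vertex conditions. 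Collecting these conjunctions over all $i \in \Pindex$ yields precisely (b).

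The main obstacle — really the only subtle point — is the treatment of points where $h$ is nondifferentiable, i.e.\ on the cell boundaries: one must argue that the correct interpretation of (a) there is the one-sided (within-cell) derivative condition, and that imposing it on the closure $\overline{X_i}$ of every cell loses nothing relative to imposing it only on interiors, which follows from continuity of $g_i$ and density of $\Int(X_i)$ in $\overline{X_i}$. A second, purely cosmetic, point is reconciling the index mismatch ($q_j$ should be $q_i$) and confirming that the quantifier ``$k \in \Vertex(X_i)$'' in (b) is shorthand for ``$v_k \in \Vertex(X_i)$''. Beyond that, the argument is the standard ``affine-on-a-polytope is controlled by its vertices'' lemma applied separately on each cell and then intersected.
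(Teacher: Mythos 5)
Your proposal is correct and follows essentially the same route as the paper: reduce condition (a) on each cell to its vertices via the fact that an affine function is nonnegative on a bounded polytope iff it is nonnegative at the vertices, then take the conjunction over cells. Both you and the paper flag the non-differentiability of $h$ on cell boundaries as the genuine subtlety and resolve it only informally --- the paper by deferring to its prior work on set-valued Lie derivatives for continuous $\PWA$ functions, you by adopting the one-sided (within-cell) derivative reading --- so your treatment is at the same level of rigor as the published argument.
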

\begin{proof}
    Since the cells are convex polytopes, we need only check (a) at all vertices to ensure that all points within the interior of the cells satisfy (a). Therefore, (b) can be derived directly from (a) by using~\eqref{eq:pwa dynamic},~\eqref{eq:derivative definition} and assumption $\alpha(x)=\Tilde{\alpha} x$ for all vertices. %
    These conditions also account for the non-differentiable nature of $h$ at boundaries of cells, however we suppress the fairly involved details, resulting in an abuse of notation. %
    Our prior work in~\cite{10108069,10313502} discusses how the theory of set-valued Lie derivatives and continuous PWA functions justifies this claim. %
    If the PWA dynamics involved switching, this claim would no longer hold. 
\end{proof}
\begin{theorem}
    A barrier function~\eqref{eq: BF PWA} is considered to be a valid barrier function if Proposition~\ref{prop:neg on bound}, Proposition~\ref{prop:continuity} and Proposition~\ref{prop: Generalized Nagumo} hold.  
\end{theorem}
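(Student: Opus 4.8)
The plan is to show that the three propositions together verify the defining requirements of a forward invariant set $C$ for the PWA dynamics~\eqref{eq:pwa dynamic}, i.e.\ that $h$ as given in~\eqref{eq: BF PWA} is a genuine barrier function in the sense of~\eqref{eq:barrier general}, and that its superlevel set $C$ sits inside $\Domain$ so that safety with respect to $C$ is meaningful. I would organize the argument around the list of structural requirements a PWA barrier function must meet: (i) $C \subseteq \Domain$, (ii) $h$ is globally continuous on $\Domain$ so that the Lie-derivative / Nagumo machinery applies, and (iii) the generalized Nagumo inequality $\mathcal{L}_f h(x) + \alpha(h(x)) \geq 0$ holds on all of $\Domain$. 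Each of these is exactly the conclusion of one of the three propositions, so the theorem is essentially an assembly step.

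First I would invoke Proposition~\ref{prop:neg on bound}: since condition~\ref{prop:neg on bound 2} ($p_i^T v_k + q_i \leq 0$ for all $(i,k)\in I_b$) is assumed, the equivalence gives $C \subseteq \Domain$ directly. Second, I would invoke Proposition~\ref{prop:continuity}: the vertex-matching condition~\eqref{eq:cont} yields that the candidate $h$ is continuous across all cell boundaries, hence continuous on $\Domain$; combined with the fact that each affine piece $h_i$ is smooth on $\Int(X_i)$, this is enough regularity to define $\mathcal{L}_f h$ a.e.\ and to appeal to the set-valued Lie derivative theory referenced after Proposition~\ref{prop: Generalized Nagumo}. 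Third, I would invoke Proposition~\ref{prop: Generalized Nagumo} with $\alpha(x) = \tilde\alpha x$, $\tilde\alpha > 0$ (an extended class-$\mathcal{K}_\infty$ function): condition (b) holds at every vertex of every cell, and by convexity of the cells the affine function $x \mapsto p_i^T(A_i x + a_i) + \tilde\alpha(p_i^T x + q_i)$ is nonnegative throughout each $X_i$, so (a) holds on all of $\Domain$. Putting (i)--(iii) together, $h$ satisfies Definition (Barrier function) on $D = \Domain$, and by Nagumo's theorem the set $C$ is forward invariant; hence $h$ is a valid barrier function.

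The only genuinely delicate point — and the one I would flag as the main obstacle — is the handling of non-differentiability of $h$ on the measure-zero set $\bigcup_{i\neq j}(X_i \cap X_j)$. On those boundary faces $\nabla h$ is not classically defined, so the inequality~\eqref{eq:barrier general} must be interpreted in the generalized (Clarke / set-valued Lie derivative) sense, and one must argue that the vertex conditions of Proposition~\ref{prop: Generalized Nagumo}, together with continuity from Proposition~\ref{prop:continuity} and the absence of switching in the vector field, suffice to guarantee that trajectories cannot leave $C$ through such a face. This is precisely the ``fairly involved details'' the authors suppress in the proof of Proposition~\ref{prop: Generalized Nagumo}, citing~\cite{10108069,10313502}; in my write-up I would state explicitly that the theorem inherits this caveat, i.e.\ validity is established for continuous (non-switching) PWA dynamics, and that the invariance conclusion follows from the Nagumo-type condition for differential inclusions once $h$ is known to be continuous and piecewise affine. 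Beyond that caveat, the proof is a short bookkeeping argument: each hypothesis of the theorem is one of the three propositions, and their conjunction is literally the definition of a valid PWA barrier function.
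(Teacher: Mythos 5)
Your proposal is correct and follows essentially the same route as the paper's proof: both arguments simply assemble the three propositions, using Proposition~\ref{prop:neg on bound} for $C \subseteq \Domain$, Proposition~\ref{prop:continuity} for continuity, and Proposition~\ref{prop: Generalized Nagumo} for the Nagumo condition. Your explicit flagging of the non-differentiability caveat on cell boundaries is a point the paper defers to the proof of Proposition~\ref{prop: Generalized Nagumo} and its cited references, so it adds clarity but does not change the argument.
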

\begin{proof}
    It is guaranteed by the first proposition that the obtained invariant set is a subset of the partition domain $\Domain$. The barrier function will be continuous as we used the proposition~\ref{prop:continuity}, and it will be differentiable within the cell's interior. Furthermore, Proposition~\ref{prop: Generalized Nagumo} guarantees the generalized Nagumo's condition. 
\end{proof}

Using the propositions presented so far, the first objective has been achieved. 
In order to achieve the second goal, maximizing the positive invariant set with respect to the current partition, we need to ensure that as many vertices in the interior of the domain as possible are also inside the invariant set or, in other words, $h_i(v_k)\geq0$ where $(i,k)\in I_{int}$. 
\begin{assumption}
    To achieve the second objective, the following inequality can be utilized:
\begin{equation}\label{eq:relaxed positive interior}
p_i^Tv_k + q_i + \tau_{int_i} > 0,\hspace{0.2em} (i,k) \in I_{int}.
\end{equation}
\end{assumption}
In~\eqref{eq:relaxed positive interior}, $\tau_{int_i}$ represents a slack variable introduced to relax the constraint. By incorporating this slack variable, some flexibility is provided. The importance of this flexibility arises from the fact that not all interior cells may be considered part of the invariant set. Including the slack variable makes it possible to identify interior cells that need to be refined to achieve the desired forward invariant set. In~\ref{sec:opt}, it will be shown how this constraint will help us to maximize the invariant set.     




By integrating these three Propositions and~\eqref{eq:relaxed positive interior}, we are able to construct a systematic approach to searching for the invariant set using the barrier function. 
\section{Main Algorithm}\label{sec:Main}
According to the assumptions and propositions made in the previous section, the main algorithm will be constructed. As a first step, the optimization problem is formulated, and then the refinement process is described to increase the capacity of the $\PWA$ function. In the end, we will discuss the search process. 
\subsection{Forming optimization problem}\label{sec:opt}
We can create an optimization problem to derive the barrier function using equations~\eqref{eq:soft negative} through~\eqref{eq:relaxed positive interior}. The cost function can be defined as follows:
\begin{equation}    \mathcal{J}=\sum_{i=1}^{M}\tau_{b_{i}}+\sum_{i=1}^{N}\tau_{int_{i}}
\end{equation}
$M$ and $N$ are the number of elements in $I_b$ and $I_{int}$ respectively.

The following represents the ultimate form of this optimization problem:
\begin{align}\label{eq:barrierrelaxfinal}
&\min_{ p_i, q_i,\tau_{int_{i}},\tau_{b_{i}}}  \quad  \mathcal{J}  \\
&\text{Subject to:} \nonumber\\
&p_i^T v_k+q_i-\tau_{b_i}\leq -\epsilon_1, \hspace{0.2em}\forall (i,k) \in I_b\nonumber\\
&p_i^T v_k+q_i+\tau_{int_i}\geq\epsilon_2, \hspace{0.2em} \forall (i,k) \in I_{int}\nonumber\\
&p_i^T(A_iv_k+a_i)+\alpha(p_i^Tv_k+q_i)\geq \epsilon_3, \hspace{0.2em}\forall i \in \Pindex\nonumber\\
&h_i(v_k)=h_j(v_k),  \forall v_k\in \Vertex(X_i)\cap \Vertex(X_j)\nonumber\\
&\tau_{b_i}\geq 0,  \forall i\in I_{\partial \Domain}\nonumber
\end{align} 
where $\epsilon_1,\epsilon_2,\epsilon_3>0$. 
\begin{lemma}
    Optimization problem ~\eqref{eq:barrierrelaxfinal} is always feasible.
\end{lemma}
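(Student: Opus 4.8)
The plan is to exhibit an explicit feasible point of \eqref{eq:barrierrelaxfinal} rather than argue abstractly, since the slack variables $\tau_{b_i}$ and $\tau_{int_i}$ are precisely what give us the freedom to do so. The key observation is that the only constraint in \eqref{eq:barrierrelaxfinal} that does not carry a slack variable is the continuity constraint $h_i(v_k)=h_j(v_k)$ together with the (equivalent) Nagumo constraint $p_i^T(A_iv_k+a_i)+\alpha(p_i^Tv_k+q_i)\geq\epsilon_3$. So the natural candidate is the trivial barrier function obtained by taking $p_i=0$ and $q_i=q$ for a single common constant $q$ across all cells: this is automatically continuous, and the Nagumo inequality collapses to $\alpha q = \tilde\alpha q \geq \epsilon_3$, which holds as soon as $q \geq \epsilon_3/\tilde\alpha > 0$.

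First I would set $p_i = 0$ for all $i\in\Pindex$ and $q_i = q := \epsilon_3/\tilde\alpha$ for all $i$. Then I would check the three families of inequality constraints in turn. The continuity constraints hold trivially since $h_i \equiv q \equiv h_j$. The Nagumo constraints reduce to $\tilde\alpha q \geq \epsilon_3$, which holds with equality by the choice of $q$ (or one may take $q$ slightly larger to get strict slack; either way it is satisfied). For the boundary constraints $p_i^Tv_k + q_i - \tau_{b_i} \leq -\epsilon_1$: with $p_i=0$ this becomes $q - \tau_{b_i} \leq -\epsilon_1$, so choose $\tau_{b_i} := q + \epsilon_1 > 0$, which simultaneously satisfies the sign constraint $\tau_{b_i}\geq 0$. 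For the interior constraints $p_i^Tv_k + q_i + \tau_{int_i} \geq \epsilon_2$: with $p_i=0$ this becomes $q + \tau_{int_i} \geq \epsilon_2$, so choose $\tau_{int_i} := \max\{0,\ \epsilon_2 - q\}$. This assignment satisfies every constraint, so the optimization problem is feasible.

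There is no serious obstacle here; the lemma is essentially a consequence of having introduced slack on every inequality that involves the free parameters in a nonhomogeneous way, while the homogeneous-looking constraints (continuity and Nagumo) are jointly satisfiable by a constant function. The only point requiring a word of care is that $\alpha$ is assumed linear, $\alpha(x)=\tilde\alpha x$ with $\tilde\alpha>0$ (as in Proposition~\ref{prop: Generalized Nagumo}), so that the Nagumo constraint evaluated on the constant barrier function is a single linear inequality in $q$ with positive coefficient and hence solvable; I would state this assumption explicitly in the proof. It is also worth remarking that the feasible point constructed is degenerate — it corresponds to the empty or trivial invariant set — which is exactly why the refinement step of Section~\ref{sec:Ref} and the cost $\mathcal{J}$ are needed to push toward a meaningful (maximal) invariant set; feasibility alone guarantees only that the solver returns something, not that it returns a useful barrier function.
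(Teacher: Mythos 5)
Your proof is correct, and it is in fact more complete than the paper's own argument, which consists of the single sentence that feasibility follows ``as a result of the construction of the optimization problem.'' The implicit reasoning in the paper is that the slack variables $\tau_{b_i}$ and $\tau_{int_i}$ absorb any violation, but as you rightly observe, this alone does not settle the matter: the continuity constraints $h_i(v_k)=h_j(v_k)$ and the Nagumo constraints carry no slack, so one must still exhibit parameters $(p_i,q_i)$ satisfying those jointly. Your explicit witness --- the constant function $p_i=0$, $q_i=q=\epsilon_3/\tilde\alpha$, with $\tau_{b_i}=q+\epsilon_1$ and $\tau_{int_i}=\max\{0,\epsilon_2-q\}$ --- closes exactly that gap, and each of the five constraint families checks out under it (the Nagumo constraint reduces to $\tilde\alpha q\ge\epsilon_3$, which holds by the choice of $q$ and the standing assumption $\alpha(x)=\tilde\alpha x$ with $\tilde\alpha>0$). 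Your closing remark that the feasible point is degenerate and yields no useful invariant set is also apt: it explains why feasibility of \eqref{eq:barrierrelaxfinal} is a weak statement and why the validity check $\sum_i\tau_{b_i}=0$ and the refinement loop carry the real content of the method.
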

\begin{proof}
As a result of the construction of the optimization problem, this result is obtained. 
\end{proof}

Although the optimization problem is always feasible, The solution will be the valid control barrier function if and only if:
\begin{equation}
\sum_{i=1}^{M}\tau_{b_{i}}=0.
\end{equation}
If a non-zero $\tau_{b_i}$ exists, it indicates that the obtained solution is not a valid barrier function. In other words, based on the current solution, no forward invariant set $C$ satisfies the condition $C \subseteq \Domain$. In order to address this issue, a refinement process is proposed. Refinement involves dividing each cell into smaller subcells, allowing for the construction of higher-capacity $\PWA$ barrier functions. The details of the refinement procedure will be discussed in the following sections.
\begin{lemma}\label{lemma: maximal}
    The valid invariant set obtained from the optimization problem~\eqref{eq:barrierrelaxfinal} is the largest possible invariant set with the current partition and $\PWA$ barrier function.
\end{lemma}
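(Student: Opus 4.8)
The plan is to argue that the optimization problem in \eqref{eq:barrierrelaxfinal} does not merely produce \emph{some} valid barrier function, but one whose superlevel set $C=\{x\in\Domain:h(x)\geq 0\}$ contains every interior vertex that can simultaneously be contained by any valid $\PWA$ barrier function subordinate to the current partition $\prtition$. Since each cell is the convex hull of its vertices (Assumption~\ref{ass:assumption bounded polytopes}) and each $h_i$ is affine on $X_i$, the set $C$ restricted to a cell $X_i$ is determined by the signs of $h_i$ at the vertices of $X_i$: in particular, $X_i\subseteq C$ if and only if $h_i(v_k)\geq 0$ for all $v_k\in\Vertex(X_i)$. So "maximizing the invariant set with respect to the current partition" should be formalized as: maximizing the number (or, more carefully, an inclusion-maximal collection) of pairs $(i,k)\in I_{int}$ for which $h_i(v_k)\geq 0$, among all $(p_i,q_i)$ satisfying the hard constraints of Propositions~\ref{prop:neg on bound}, \ref{prop:continuity}, and~\ref{prop: Generalized Nagumo}.

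The key steps, in order, would be: (i) Fix the partition $\prtition$ and let $\mathcal{S}$ denote the feasible set of coefficient tuples $(p_i,q_i)_{i\in\Pindex}$ satisfying continuity \eqref{eq:cont}, the generalized Nagumo condition \eqref{eq: Generalized Nagumo}(b), and negativity on the boundary \eqref{prop:neg on bound 2} (the case $\sum\tau_{b_i}=0$, so that $C\subseteq\Domain$ holds). (ii) Observe that for any $(p,q)\in\mathcal{S}$ the associated invariant set $C$ satisfies $C=\bigcup\{X_i : h_i(v_k)\geq 0\ \forall v_k\in\Vertex(X_i)\}$ by the convex-combination argument used in Propositions~\ref{prop: Generalized Nagumo} and~\ref{prop:neg on bound}. (iii) Show that the objective $\mathcal{J}$, via the slack terms $\tau_{int_i}$, drives the solution to satisfy $p_i^Tv_k+q_i\geq\epsilon_2$ (i.e.\ $\tau_{int_i}=0$) for as many $(i,k)\in I_{int}$ as is jointly feasible: at an optimal solution, $\tau_{int_i}>0$ precisely when no feasible tuple in $\mathcal{S}$ can make that vertex interior-positive together with the others whose slacks are already zero. (iv) Conclude that if some other valid $\PWA$ barrier function on $\prtition$ yielded an invariant set $C'$ strictly larger than the computed $C$, then $C'$ would include an interior vertex $v_k$ of some cell with $\tau_{int_i}>0$ in our solution while keeping all the hard constraints feasible, contradicting optimality of $\mathcal{J}$ — so $C$ is inclusion-maximal.

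The main obstacle I anticipate is step (iii)–(iv): the objective $\mathcal{J}=\sum\tau_{b_i}+\sum\tau_{int_i}$ is a \emph{sum} of slacks, so minimizing it is not literally the same as maximizing the number of cells in $C$, nor as achieving an inclusion-maximal set. For instance, the optimizer might prefer to make two slacks slightly positive rather than one slack large, which could exclude two cells instead of one; or weighting could matter. To handle this cleanly I would either (a) restrict the claimed "maximality" to the precise sense that $\mathcal{J}$ is minimized and argue that any strictly larger invariant set forces a strictly smaller feasible-achievable slack sum for the interior constraints (using that the $\tau_{b_i}$ block is already at zero in the valid case), or (b) note that since the problem is a linear program, the set of achievable interior-positive vertex patterns forms a combinatorial structure on which $\mathcal{J}=0$ on the $\tau_{int}$ block exactly when the full interior-positivity system is feasible, and when it is infeasible the LP selects a vertex of the feasible polyhedron; a careful statement of maximality then needs the monotonicity that adding more interior-positive vertices never decreases feasibility slack, which should follow because the hard constraints \eqref{eq:cont} and \eqref{eq: Generalized Nagumo}(b) do not involve the $\tau_{int_i}$ at all. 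I would make sure the final lemma statement is phrased to match whichever of these precise senses the proof actually delivers, rather than the informal "largest possible."
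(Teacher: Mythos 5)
Your overall strategy --- derive a contradiction from minimality of the slack sum --- is the same one the paper uses. The paper's proof supposes a strictly larger invariant set $C'$ with barrier $h'$, takes a point $x_1\in C'\setminus C$ lying in a single cell $X_1$, deduces $h_1'(x_1)>h_1(x_1)$ and hence $h_1'(v_k)>h_1(v_k)$ at some vertex of $X_1$, concludes $\tau'_{int_1}<\tau_{int_1}$, and then asserts that $C\subset C'$ forces $\tau'_{int_i}\leq\tau_{int_i}$ at every other interior vertex, so that the interior-slack sum for $h'$ is strictly smaller, contradicting optimality of \eqref{eq:barrierrelaxfinal}.

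The obstacle you flag in steps (iii)--(iv) is therefore not a hypothetical: it is exactly the step the paper leaves unjustified. Containment of zero-superlevel sets, $C\subseteq C'$, constrains only the sign pattern of $h'$ relative to $h$, not their values, so it does not imply $h'(v_k)\geq h(v_k)$ --- and hence does not imply $\tau'_{int}\leq\tau_{int}$ --- at vertices where both functions are nonnegative but one sits below the threshold $\epsilon_2$ in \eqref{eq:relaxed positive interior}, nor at vertices excluded from both sets. A strictly larger $C'$ can perfectly well be realized by a feasible $(p',q')$ whose interior-slack sum is \emph{larger}, so minimizing $\mathcal{J}$ does not certify inclusion-maximality of $C$; even the preliminary step that some vertex of $X_1$ has a strictly smaller \emph{active} slack needs care, since the vertex witnessing $h_1'(v_k)>h_1(v_k)$ need not be one where $h_1$'s slack is positive. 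Your proposed remedies are the right ones: either restate the lemma in the precise sense the LP actually delivers (optimality of $\mathcal{J}$, equivalently maximality of the sum of barrier values at interior vertices subject to the hard constraints), or prove a genuine monotonicity property of the achievable slack vectors before invoking the contradiction. As written, neither your sketch nor the paper's argument establishes the literal claim that $C$ is the largest invariant set representable on the current partition.
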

\begin{proof}
    Let's consider that there exists an invariant set $C'$ where $C\subset C'$. 
    As a result, there are points in a number of cells that belong to $C'$ but are not included in $C$. For simplicity, let's assume the difference of invariant set $C'$ and $C$ is only at cell $X_1$ at some points such as $x_1$. This assumption does not pose any problems for generalization.At $x_1$, the barrier function corresponding to $C'$, $h'(x)$, is positive, whereas the barrier function corresponding to $C$, $h(x)$, is negative. Therefore, we can conclude that $h_1'(x_1)>h_1(x_1)$. Due to the parametrization of the barrier function as a $\PWA$ function, and $h_1'(x_1)>h_1(x_1)$, there exists a $v_k\in \Vertex(X_i)$ where $h_1'(v_k)>h_1(v_k)$. Regarding~\eqref{eq:relaxed positive interior}, there exist a $\tau'_{int_1}<\tau_{int_1}$, where $\tau'_{int_1}$ and $\tau_{int_1}$ are the slack variables in~\eqref{eq:relaxed positive interior} for $h_1'(x)$ and $h_1(x)$ respectively. Moreover, according to the assumption $C\subset C'$ we can conclude that the slack variable at all the other interior cells is $$\tau'_{int_i}\leq \tau_{int_i} \quad \forall i \in I_{int}-\{1\}.$$ Therfore, $\sum_{i=1}^{N}\tau_{int_{i}}>\sum_{i=1}^{N}\tau'_{int_{i}}$ which is contrary to the minimization of the slack variable $\tau_{int_i}$ in the optimization problem~\eqref{eq:barrierrelaxfinal}. As a result, there is no invariant set associated with the current partition where $C\subset C'$.  
\end{proof}
\begin{algorithm}[tb]
    \begin{algorithmic}
    \REQUIRE  $\PWA(x)$, Vertices
    \STATE{Solve Optimization Problem ~\eqref{eq:barrierrelaxfinal} with the initial $\PWA$ dynamics.}
    \WHILE{$\Sigma_{i=1}^{N}\tau_{b_i}\neq 0$ and $\textit{computational-time}<3600s$}  
    \FOR{$\tau_{b_i}\neq0$ and $\tau_{int_i}\neq0$} \STATE{1- Refine $X_i$ using vector field refinement Section~\ref{sec:Ref}.}\ENDFOR 
    \STATE{Solve Optimization Problem \eqref{eq:barrierrelaxfinal} with the refined $\PWA$ dynamics}
	\ENDWHILE
    \RETURN $h(x) = p_i^Tx+q_i    \quad \forall i\in I(\prtition) $.
    \end{algorithmic}
    \vspace{1em}
    \caption{Barrier function search using vertices with known control}
    \label{alg:refinesearch}    
\end{algorithm}

\subsection{Refinement}\label{sec:Ref}
A refinement of the current partition aims to increase the flexibility of the Barrier function search process. In mathematical terms, given two partitions $\prtition = \{Y_i\}_{i \in I}$ and $\Rprtition = \{ \qState_j\}_{j \in J}$ of a set $S = \Dom{\prtition} = \Dom{\Rprtition}$, we say that $\Rprtition$ is a refinement of $\prtition$ if $\qState_j \cap Y_i \neq \emptyset$ implies that $\qState_j \subseteq Y_i$.
To increase flexibility, we introduce a refinement on the cell $X_i$ with a nonzero slack variable, $\tau_{b_i}$. As a result of the refinement of the cell $X_i$, consider that two new sub-cells will be created, $X_{i_1}$ and $X_{i_2}$. Each sub-cell can be parameterized with a $\PWA$ Barrier function, denoted as $h_{i_1}=p_{i_1}^Tx+q_{i_1}$ and $h_{i_2}=p_{i_2}^Tx+q_{i_2}$, respectively. By dividing the cell into smaller sub-cells and introducing separate parameterizations, the candidate Barrier function for cell $X_i$ achieves a higher capacity $\PWA$ function, thereby increasing flexibility in the Barrier function search process.
This paper's refinement technique includes two steps: choosing a new vertex and using Delaunay triangulation to form subcells. The first step is choosing a new vertex on the boundary of the considered cell with a nonzero slack variable using the vector field approach~\cite{10313502}. This method is designed to decrease the variation of the vector field of the new subcells. Therefore, the new vertex is chosen in a way where its vector field will be the angle bisector of the edge, the faces of 1 dimensions~\cite{henk2017basic}, with the biggest variation of the vector field. Based on the newly appointed vertex, subcells will be formed using Delaunay triangulation. A Delaunay triangulation will also be helpful in maintaining continuity. For more detail, please see~\cite{10313502}.  
\subsection{Search Algorithm}
The pseudo-code for the search algorithm is presented in \textbf{Algorithm}~\ref{alg:refinesearch}. The objective is to identify a valid Barrier function on the initial partition by performing an optimization, as given in~\eqref{eq:barrierrelaxfinal}. In cases where a valid solution is not obtained initially, the algorithm proceeds to refine the partition defining the barrier function. This refinement is directed towards cells exhibiting $\tau_{b_i}\neq 0$ or $\tau_{int_i}\neq 0$. Subsequently, the selected cells are subdivided into smaller sub-cells, and the optimization process is repeated on the refined partition. The iterative refinement process continues until a valid solution is found or the maximum computation time is reached. 

As a result of the systematic approach outlined by the algorithm, the partition can be refined in a methodical manner while actively searching for a valid Barrier function. Using this iterative approach increases the capacity of the Barrier function, although it may significantly increase the size the optimization problem.

In this search algorithm, the next important issue is how to select $\Tilde{\alpha}$ in~\eqref{eq: Generalized Nagumo}. The literature does not provide a systematic method for selecting the $\kappa_\infty$ function in~\eqref{eq: Generalized Nagumo}. Choosing $\Tilde{\alpha}=0$ will force all level sets of the barrier function to be invariant, resulting in a conservative result. This paper uses the bisection method to find the least conservative $\Tilde{\alpha}$. Bisection is performed on a predefined interval $[\Tilde{\alpha}_1,\Tilde{\alpha}_2]$. The \textbf{Algorithm}~\ref{alg:refinesearch} will be tested by the $\Tilde{\alpha}_1$ and $\Tilde{\alpha}_2$. 
Upon bisection, the greatest $\Tilde{\alpha}$ that satisfies the search algorithm for finding the valid barrier function is selected.
This is done by selecting $\Tilde{\alpha}_2$ so that it violates the algorithm's computation time limit and selecting $\Tilde{\alpha}_1=10^{-3}$ so that it is conservative but still within the algorithm's operational time limit.    
Please note that the search interval will depend on the nature of the problem, and we may need to extend the search interval in order to find the least conservative invariant set. 
\section{Results and simulations}\label{sec:res}
We provide three examples to illustrate the performance of the Barrier function search algorithm presented in \textbf{Algorithm} \ref{alg:refinesearch}. All computations are implemented using Python 3.9 and the Mosek optimization package ~\cite{aps2022mosek} on a computer with a 2.1 GHz processor and 8 GB RAM. A tolerance of $10^{-8}$ is used to determine if a number is nonzero. 
Moreover, the examples employ a tolerance of ${\epsilon_1=\epsilon_2=\epsilon_3=10^{-4}}$.
These examples demonstrate the proposed algorithm's effectiveness and efficiency in finding valid $\PWA$ Barrier functions or the forward invariant set, highlighting the algorithm's performance in different scenarios.
\begin{example}[Inverted Pendulum~\cite{rabiee2023soft}]
The inverted pendulum system can be described by the following equations ~\cite{rabiee2023soft}:
\begin{align}
\dot{x}_1 &= x_2 \nonumber\\
\dot{x}_2 &= \sin(x_1) + u\nonumber
\end{align}
where $x_1$ represents the angle of the pendulum and $x_2$ is its angular velocity. The control input $u$ is given by $u = 3x_1 + 3x_2$ and is subject to saturation between the lower bound $-1.5$ and the upper bound $1.5$.
The domain $\Domain$ of the system is defined as $\Domain=\pi - ||x||_\infty\geq 0$, where $||x||_\infty$ represents the infinity norm of the state vector $x$. 
We approximate the right hand side of the dynamics using a ReLU neural network which is equivalent to $PWA(x)$. 
To train a neural network for this system, a dataset of $10000$ samples is used. The neural network is a single-hidden layer ReLU network with eight neurons. The training is performed using PyTorch 2.0.0 with ADAM optimization.

For this problem, we have set $\Tilde{\alpha}=0.2$. \textbf{Algorithm} \ref{alg:refinesearch} started with 34 cells and ended up with 680 due to the refinement. The forward invariant set is approximated, as shown in Fig.\ref{fig:Barrier function}. Moreover, A sample trajectory with the initial conditions $x_0=[-1,3]^T$ is shown in Figure \ref{fig:Sample trajectory_IP}. Additionally, the value of the barrier function for the specified trajectory is shown in Fig.\ref{fig:Barrier value}. 
Details about this example can be found in Table.\ref{table:summary}. The computation time is based on the average time it takes to find the solution after it has been run ten times in Table.\ref{table:summary}.
\begin{figure}
    \centering
    \includegraphics[scale=0.5]{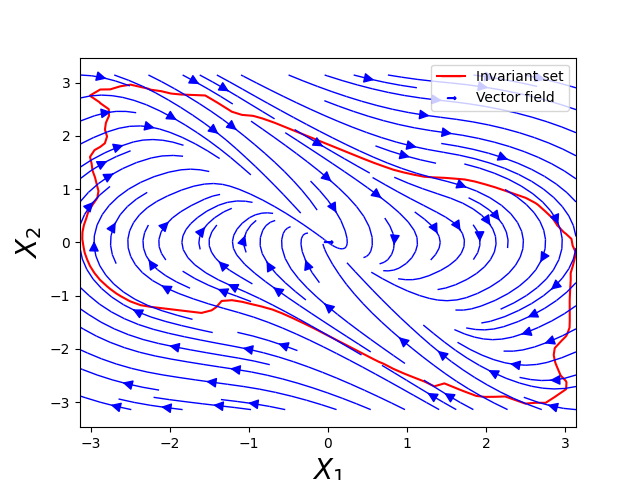}
    \caption{Vector fields for the Inverted Pendulum identified using ReLU NN and its estimated Invariant set}
    \label{fig:Barrier function}
\end{figure}
\begin{figure}
    \centering
    \includegraphics[scale=0.5]{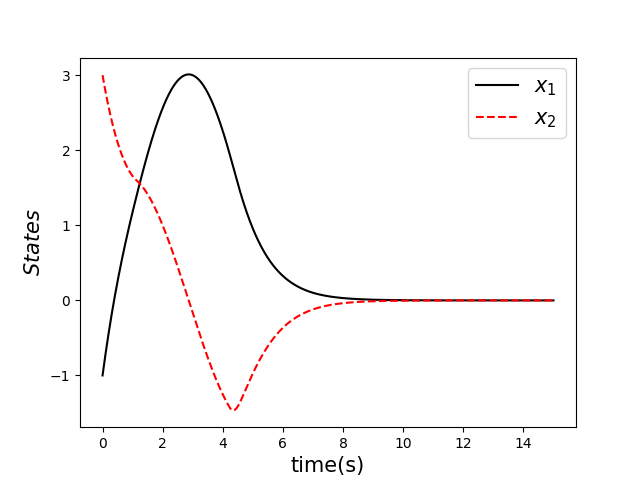}
    \caption{$x_1$ and $x_2$ for a sample trajectory with initial condition $x_0=[-1,3]^T$}
    \label{fig:Sample trajectory_IP}
\end{figure}
\begin{figure}
    \centering
    \includegraphics[scale=0.47]{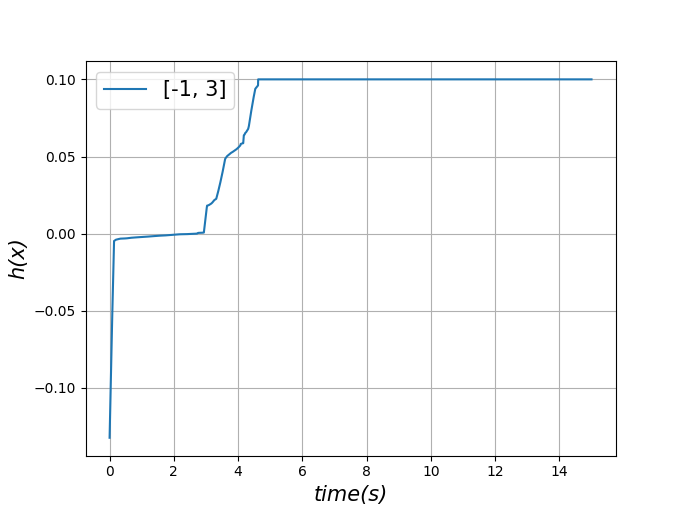}
    \caption{$h(x)$ along the sample trajecty. Since the trajectory starts from outside the invariant set, the value of $h(x)$ is negative. As soon as they enter the invariant set, the value of $h(x)$ becomes positive.}
    \label{fig:Barrier value}
\end{figure}   
\end{example}

\begin{example}[Explicit model-predictive controller~\cite{10313502}]
To formulate a control policy for a discrete-time linear dynamical system, we employ Model Predictive Control (MPC) following the methodology outlined in ~\cite{10313502}. The dynamics of the considered system are governed by the following equations:

\begin{align}
x_{t+1} = \begin{bmatrix}
1 & 1\\
0 & 1
\end{bmatrix}x_{t}+\begin{bmatrix}
1 \\
0.5
\end{bmatrix}u_{t}\label{eq:mpcdiscrete}
\end{align}

To address the MPC problem, we adhere to the specifications presented in ~\cite{10313502}, encompassing the stage cost, actuator constraints, and state constraints. The explicit MPC controller is implemented using the MPT3 toolbox ~\cite{kvasnica2004multi} in the Matlab environment. A discretization step of 0.01 seconds is employed to convert the discrete-time to the continuous-time dynamics.

The effectiveness of the proposed algorithm for the continuous form of ~\eqref{eq:mpcdiscrete} with the explicit MPC controller is validated. The trail and error method is used in this example in order to obtain $\Tilde{\alpha}=4.5$ as the least conservative value. In Fig.\ref{fig:barrier function MPC}, the invariant set of the 2-D MPC is depicted, providing a visual representation of the system's behavior. Additionally, Fig.\ref{fig:barrier value MPC} illustrates the value of the barrier function along sample trajectories for the 2-D MPC. Notably, the algorithm initiates with 26 cells and undergoes four iterations of refinement, resulting in an increased number of cells (136).
\begin{figure}
    \centering
    \includegraphics[scale=0.5]{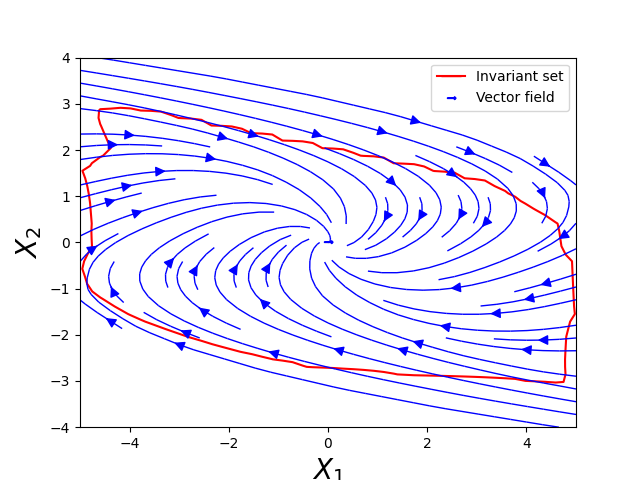}
    \caption{2-D mpc vector fields, invariant set}
    \label{fig:barrier function MPC}
\end{figure}
\begin{figure}
    \centering
    \includegraphics[scale=0.5]{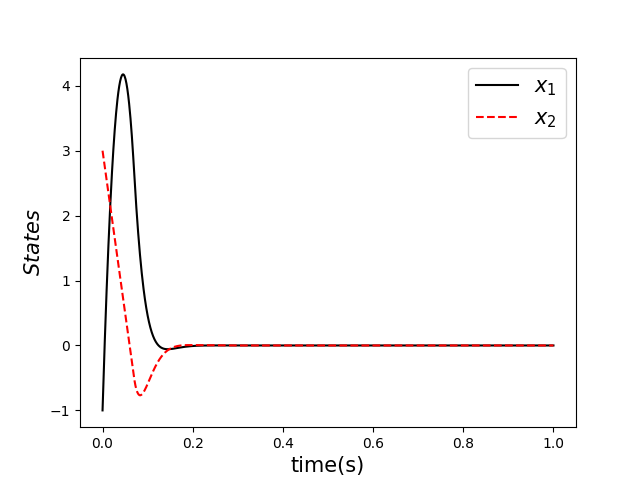}
    \caption{The states for the sample trajectory with the initial condition $x_0=[4,-3.5]^T$.}
    \label{fig:enter-label}
\end{figure}
\begin{figure}
    \centering
    \includegraphics[scale=0.5]{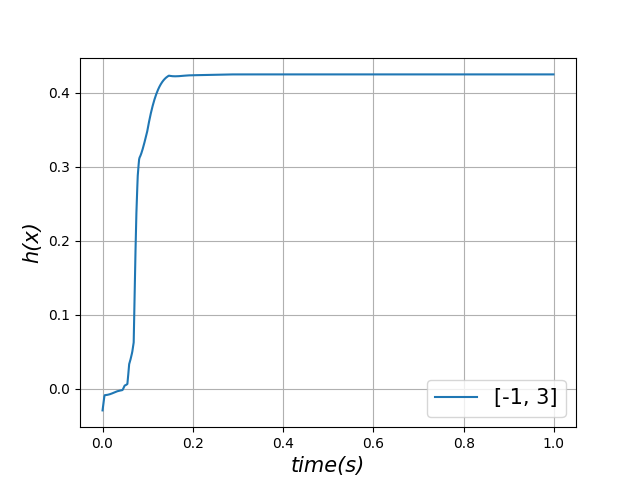}
    \caption{$h(x)$ along the sample trajectory for the 2-D MPC.}
    \label{fig:barrier value MPC}
\end{figure}
\end{example}
\begin{example}[4-D Example ~\cite{chen2021learning}]
\label{example:4D}
In this example, we explore a 4-D Model Predictive Control (MPC) scenario based on the formulation presented in ~\cite{chen2021learning}. The system dynamics are given by:

\begin{align}\label{eq:MPC4D}
    x_{t+1}=&\begin{bmatrix}
      0.4346&-0.2313&-0.6404&0.3405\\
      -0.6731&0.1045&-0.0613&0.3400\\
      -0.0568&0.7065&-0.086&0.0159\\
      0.3511 &0.1404&0.2980&1.0416
  \end{bmatrix}x_t+\\
  &\begin{bmatrix}
      0.4346,-0.6731,-0.0568,0.3511
  \end{bmatrix}^Tu_t.\nonumber
\end{align}

This example inherits the key parameters from ~\cite{chen2021learning}, such as a state constraint of $\lVert x \rVert_{\infty} \leq 4$, an input constraint of $\lVert u \rVert_{\infty} \leq 1$, a prediction horizon of $T=10$, a stage cost of $Q=10I$, and $R=1$. The design of the MPC controller incorporates a state constraint $\lVert x \rVert_{\infty} \leq 4$, an input constraint $\lVert u \rVert_{\infty} \leq 1$. The terminal cost utilizes $P_\infty$, a solution to the Riccati equation defined by $(A,B,Q,R)$, and the terminal set is selected to maximize the positive invariant set.

We find the invariant set of a continuous-time linear system corresponding to~\eqref{eq:MPC4D} when using an explicit MPC controller for feedback. %
We assume that~\eqref{eq:MPC4D} corresponds to a sampling time of $0.01$s. %
The closed loop is a $\PWA$ dynamical system with 193 cells. %
We employ vector field refinement techniques to search for the continuous $\PWA$ Barrier function using \textbf{Algorithm} \ref{alg:refinesearch}. In this example, we found $\Tilde{\alpha}=9.7$ by trial and error. With the vector field refinement, the algorithm successfully converges, finding the solution in 2886 seconds and generating 17689 cells.
\end{example}
\def\mtw{\textwidth}
\begin{table}
\vspace{1em}
\centering
\begin{tabular}{p{0.12\mtw}|p{0.1\mtw}|p{0.08\mtw}|p{0.1\mtw}}
\cline{1-4}
Example & Initial number of cells & final number of cells &  Computational time(s)\\ \hline
Inverted Pendulum  & 34 & 680 & 10.3 \\ \hline
2-D MPC & 26 & 136 & 1.21 \\  \hline
4-D MPC &192& 17689&2886\\ \hline
\end{tabular}	
\caption{Summary of results for finding the valid barrier function using the proposed methods \textbf{Algorithm}.~\ref{alg:refinesearch}.}
\label{table:summary}
\end{table}

\paragraph*{Limitations}
In order to find the $\PWA$ barrier function for the $\PWA$ dynamical system with $n$ state variables, we require  $(n+1)n_r$ optimization variables, where $n_r$ represents the number of cells. In this paper, we added the $M+N$ slack variable as described in \ref{sec:opt}. It has been demonstrated in~\cite{10313502} that $n$ influences the number of cells $n_r$. As described in ~\cite{el2012interior}, the computational complexity of a linear optimization problem is $O(n_p^{3/4}log(\frac{n_p}{\epsilon}))$ where $n_p$ is the number of variables and $\epsilon$ is the accuracy. As a result, $n$ as the dimension of the state space directly impacts the optimization process's computational time. When applying our algorithm to various systems, it is essential to consider these complexities and challenges.  
\section{Conclusion}
This paper presents a novel computational framework to estimate the invariant set for $\PWA$ dynamical systems or their equivalent ReLU neural networks, addressing the challenge of ensuring safety guarantees. 
The key contribution is formulating the barrier function as a $\PWA$ function and casting the search process as a linear optimization problem. The method intelligently refines the domain when the optimization fails, enhancing the search process's flexibility. The optimization problem aims to maximize the invariant set, improving safety guarantees. The proposed method effectively estimates the invariant set through non-trivial examples while maintaining computational efficiency. The results demonstrate the approach's capability and applicability in providing safety guarantees for $\PWA$ dynamics and ReLU neural networks in safety-critical systems. Further exploration and extension of this approach hold potential for more sophisticated techniques in addressing safety concerns in neural network-based systems.
\bibliographystyle{cas-model2-names}

\bibliography{acc2023}
\end{document}